\newtheorem{theorem}{Theorem}
\newtheorem{proposition}[theorem]{Proposition}
\newtheorem{remark}[theorem]{Remark}
\newenvironment{proof}[1][Proof]{\noindent\textbf{#1.} }{\ \rule{0.5em}{0.5em}}
\newenvironment{sistema}{\left\lbrace\begin{array}{@{}l@{}}}{\end{array}\right.}
\begin{document}

\title{Learning and Portfolio Decisions\\ for HARA Investors}

\author{Michele Longo\footnote{Universit\`{a} Cattolica del Sacro Cuore, Largo Gemelli, 1, 20123, Milano, Italy. E-mail address: michele.longo@unicatt.it} \footnote{Corresponding Author} \qquad \qquad Alessandra Mainini\footnote{Universit\`{a} Cattolica del Sacro Cuore, Largo Gemelli, 1, 20123, Milano, Italy. E-mail address: alessandra.mainini@unicatt.it}}
\maketitle

\begin{abstract}
We maximize the expected utility from terminal wealth for an HARA investor when the market price of risk is an unobservable random variable. We compute the optimal portfolio explicitly and explore the effects of learning by comparing it with the corresponding myopic policy. In particular, we show that, for a market price of risk constant in sign, the ratio between the portfolio under partial observation and its myopic counterpart increases with respect to risk tolerance. As a consequence, the absolute value of the partial observation case is larger (smaller) than the myopic one if the investor is more (less) risk tolerant than the logarithmic investor. Moreover, our explicit computations enable to study in details the so called \textit{hedging demand} induced by learning about market price of risk. 

\medskip
\noindent \textbf{Keywords:} Investment models, Learning, Bayesian control, Hamilton-Jacobi-Bellman equation, Likelihood ratio order.
\newline
\textbf{Mathematics Subject Classification (2010)}: 93E20.\\
\textbf{JEL Classification}: G11, G14, C61.
\end{abstract}

\newpage


\section{Introduction}

This paper is a contribution to continuous-time portfolio selection problems under partial observation.
Pioneering works about partial observation in financial contexts go back to \citep{detemple86}, \citep{dothan-feldman-86} and \citep{gennotte-86} where,  within an equilibrium framework, uncertainty is introduced in a linear Gaussian setting. Since then, financial models with partial observation have been extensively researched in an increasing number of papers.
In \citep{lakner-95} the Author studies, by means of martingale methods, the optimal investment and consumption strategies in a rather general framework and provides explicit results for logarithmic and power investors in the bayesian specification for the unobservable mean rate of return. The same methodology is applied in \citep{lakner-98} to explore further the linear gaussian setting for a terminal wealth maximizer.
By using a dynamic programming approach, the linear gaussian diffusion specification for the appreciation rate is also studied in \citep{rishel99}, who considers power-type investors, and in \citep{brendle-06}. In particular, in the latter work, the Author assumes a gaussian mean-reverting process for the unobserved process and power/exponential preferences for the agents -- thus extending the model in \citep{kim-omberg-96} to the partial observation setting -- and characterizes the value function and the optimal investment policy by means of a system of ordinary differential equations, which is then used to study the loss of utility due to partial observation.
The dynamic programming setting is also used in \citep{brennan98} and \citep{xia01} to analyze, by means of numerical simulations, the qualitative and quantitative effects of learning on the portfolio allocation of a power investor when the mean return on the risky asset is a random variable with a gaussian prior. The model in \citep{brennan98} is explicitly solved in \citep{rogers-01} and used to compare the cost of uncertainty with that of being constrained to change portfolio allocations only at discrete times. In \citep{cvitanic-lazrak-martellini-zapatero-06}, a model with constant, but random, mean rates of returns is used to analyze the value of professional analyst’s recommendations.
Perhaps, the closest papers to our work are \citep{karatzas-zhao-01} and \citep{rieder-bauerle}.
In the first paper, the Authors, by using both martingale methods and dynamic programming techniques, study in detail the general bayesian case for the market price of risk for a wide class of terminal wealth investors. The paper presents explicit solutions and analyzes the cost of uncertainty by comparing the indirect utility in the partial observable setting with the corresponding complete observation quantity (i.e. the indirect utility when the market price of risk is a completely observable random variable). Nevertheless, the work does not explore the effects of uncertainty on the optimal portfolio strategy.
In \citep{rieder-bauerle}, the Authors examine the case where the unobservable stock drift rate is described by a continuous-time finite state Markov chain and investors' preferences over terminal wealth are of either power or logarithmic type. By using a dynamic programming approach, they first prove a verification theorem for the general case and then solve explicitly the special bayesian case of a constant finite-state random variable. Moreover, they compare the optimal policy under partial observation with the \textit{myopic} portfolio, that is, the portfolio obtained by first solving the investor's maximization problem by considering the market price of risk a given parameter and then replacing the latter with its conditional expected value.
The continuous-time Markov chain model is also studied, using Martingale methods, Malliavin calculus, and Montecarlo simulations, in \citep{honda-03} and \citep{haussmann-sass-04}, where more details on the optimal policies are only given for precise values of the risk-aversion parameter.

The cited works considered the classical diffusion setting for asset prices, with several specifications for the unobservable drift. Two different approaches are presented in \citep{callegaro-di-masi-runggaldier-06} and \citep{bauerle-rieder-07} where the asset prices are driven
by a pure jump process in the first case, and a jump-diffusion process in the second. Both works assume unobservable jump intensity and the second paper also study in some details, obtaining results similar to those in \citep{rieder-bauerle}, the effect of uncertainty on the optimal portfolio strategy.

Within the Wiener setting for asset prices, few works -- to the best of our knowledge, the only exceptions are \citep{brennan98}, \citep{honda-03}, \citep{rieder-bauerle}, and \citep{cvitanic-lazrak-martellini-zapatero-06} where partial (often relying on numerical simulations)  results are presented -- explore analytically the effect of uncertainty on investment decisions by comparing portfolio allocations under partial observation with the corresponding myopic counterparts.
In this paper we study a continuous-time portfolio choice problem under partial observation where investors are characterized by Hyperbolic Absolute Risk Aversion (HARA) utility functions, do not observe the stock appreciation rate nor the Brownian motion driving the price process, and maximize expected utility from terminal wealth by choosing portfolio strategies based on the stock price observations. We choose a Bayesian approach by describing the market price of risk as an unobservable random variable, independent of the driving Brownian motion, and with known prior distribution. By means of elementary filtering techniques, the problem is reduced to a complete observation (and markovian) setting and, by a dynamic programming approach, solved explicitly. Then, we investigate a number of properties of both the optimal portfolio under partial observation and the \textit{hedging demand} due to market price of risk uncertainty, defined as the difference between the portfolio under partial observation and the corresponding myopic strategy. In our main result (Theorem \ref{theorem}), we prove, by using stochastic order arguments, that the ratio between the optimal portfolios under partial observation and myopic portfolio increases with respect to the degree of absolute risk tolerance if the market price of risk is constant in sign (no matter whether positive or negative). A direct consequence of this result is that the absolute value of the optimal portfolio in the case of partial observation is larger (respectively smaller) than the myopic case if the investor is more (respectively less) risk tolerant than the logarithmic investor. This extends the results in \citep{rieder-bauerle}, Section 7 and 8, to the general bayesian case for the market price of risk representation. Moreover, in this general setting, we also examine the monotonicity of the portfolios as well as of the hedging demand with respect to the degree of risk tolerance and answer to a conjecture raised by \citep{rieder-bauerle}. In fact, in the constant relative risk aversion specification for the agent's preferences and positive market price of risk, we show that the portfolio under partial observation is increasing with respect to the degree of risk tolerance, according to the conjecture, but, contrary to it, the hedging demand is not increasing in the degree of risk tolerance. Finally, we consider in detail the gaussian case for the unobserved market price of risk and show that the constancy in sign of the market price of risk is only a sufficient condition for the optimal portfolios and the hedging demand to have the properties discussed above. We also comment on a statement in \citep{brennan98}.

The paper is organized as follows. In Section \ref{model} we set the model and provide the main steps of the analysis. In Section \ref{portfolio-comparison} we study the optimal portfolios in a number of ways and compare our results with the existing literature. Subsection \ref{subsection-gaussian-prior} considers the gaussian case. Section \ref{conclusions} discusses possible issues for future research and concludes.


\section{The investment problem}\label{model}
Time is continuous and uncertainty is described by a complete probability space $(\Omega,\mathcal F,\mathbb P)$
equipped with a filtration $\mathbb F:=\{\mathcal F_t,\;0\leq t\leq T\}$, $T<\infty$, satisfying the usual
conditions of right-continuity and $\mathbb P$-null sets augmentation, and carrying a standard Brownian motion
$\mathbb W:=\{W_t,\;0\leq t\leq T\}$. The financial market consists of two tradable assets:
a risk-free asset with constant interest rate $r\geq0$ and a risky asset whose price $S$ is governed by the stochastic differential equation
\begin{equation}
dS_t=S_t(b dt +\sigma dW_t),
\end{equation}
where $\sigma$ is a positive constant and $b$, the expected rate of return,
is a real random variable.
In this framework, if at time $t\in[0,T]$ an investor has an initial wealth $x$ and chooses a trading strategy $\pi:=\left\{\pi_s, \; t\leq s\leq T \right\}$, where $\pi_s \in \mathbb R$ represents the amount of wealth invested in the risky asset at time $s$, then his/her wealth evolves according to
\begin{equation}
dX_s=rX_sds+\sigma\pi_s(\Theta ds +dW_s),\;\;X_t=x,\;\;t\leq s\leq T,\label{wealth}
\end{equation}
where the market price of risk $\Theta:=(b-r)/\sigma$ is assumed independent of $\mathbb{W}$ and with a known prior distribution
\begin{equation}
\mu(A):=\mathbb{P}(\Theta\in A),\;\;\;\;\;\;A\in\mathcal{B}(\mathbb{R}),
\end{equation}
that satisfy $\int_{\mathbb{R}}|\theta|\,\mu(d\theta)<\infty$.
The investor does not observe $\Theta$ nor the Brownian motion $\mathbb{W}$, whereas she/he continuously observe $S$
and aims at maximizing the expected value
\begin{equation}
\mathbb{E}\left[u\left(X_T\right)\right] \label{original-functional}
\end{equation}
by choosing an investment strategy $\pi$ based on the available information at time $t$, where $u$ belongs to one of the following classes of
 Hyperbolic Absolute Risk Aversion (HARA) utility functions:
\begin{enumerate}
\item Power utilities:
\begin{equation}
u_{\gamma}(x)=\frac{1-\gamma}{\gamma}\left(\frac{\beta x}{1-\gamma}+\eta\right)^{\gamma}; \;\;\;\gamma < 1,\;\;\gamma\neq 0,\;\;\beta>0,\;\; x\in D_{\gamma},\label{power-utility}
\end{equation}
where $D_{\gamma}:=\{x \mid \beta x/(1-\gamma) +\eta >0 \}$ (the case $\eta=0$ corresponds to the Constant Relative Risk Aversion (CRRA) family).
\item Logarithmic utilities:
\begin{equation}
u_{\log}(x)= \ln (\beta x+ \eta); \;\;\;\beta>0,\;\; x\in D_{\log},\label{log-utility}
\end{equation}
where $D_{\log}:=\{x \mid \beta x +\eta >0 \}$.
\item Exponential utilities:
\begin{equation}
u_{\exp}(x)= -\exp(-\beta x); \;\;\;\beta>0,\;\; x\in D_{\exp},\label{exp-utility}
\end{equation}
where $D_{\exp}=\mathbb R$ (this is the Constant Absolute Risk Aversion (CARA) family).
\end{enumerate}
Notice that logarithmic and exponential utilities can be seen, respectively, as limit cases as $\gamma \rightarrow 0$ and $\gamma \rightarrow -\infty$ of the power utilities. In particular, the logarithmic utility is the limit, as $\gamma \rightarrow 0$, of $u_{\gamma}(x)-(1-\gamma)/\gamma$, which, being a translation of $u_{\gamma}(x)$, represents the same preferences of the latter; whereas the exponential utility is the limit, as $\gamma \rightarrow -\infty$, of $u_{\gamma}(x)$ with $\eta=1$.

\subsection{The analysis}

The analysis is mainly concerned with power utilities, for the other two classes we just provide the main results.
Let
\begin{equation}
\mathbb{F}^S:=\left\{\mathcal{F}^S_t,\;0\leq t \leq T \right\}
\end{equation}
be the $\mathbb P$-augmented filtration generated by $S$,
then, for an agent with preferences characterized by a utility function $u_{i}$  (with $i\in\{\gamma,\log,\exp\}$, see, respectively, (\ref{power-utility}), (\ref{log-utility}), and (\ref{exp-utility})), an investment strategy $\pi$ is admissible at time $t$ with initial wealth $x$, and we write $\pi\in\mathcal{A}_t$, if it is $\mathbb{F}^S$-progressively measurable, $\mathbb{E}[\int_t^T\pi_s^2ds]<\infty$, and (\ref{wealth}) admits a unique strong solution 
 $\left\{X_s,\;t\leq s \leq T \right\}$ such that $\mathbb{P}(\left\{X_s e^{r(T-s)} \in D_{i},\;t\leq s \leq T \right\})=1$ (where $D_i$ is the domain of the utility function $u_i$, $i\in\{\gamma,\log,\exp\}$).\footnote{Observe that additional conditions on the parameters and/or on the prior of $\Theta$ could be required to ensure a well-defined problem (see, for instance, Assumption 3.1 in \citep{karatzas-zhao-01} or Section 8 in \citep{rishel99}). However, here we are not concerned with this issue.}
Now, the resulting portfolio problem, which is not of complete observation type (nor markovian), can be reduced within a complete observation setting by means of standard filtering techniques (see, for instance, \citep{karatzas-zhao-01} or Section 6 in \citep{rieder-bauerle}). Let
\begin{equation}
\mu_t(A):=\mathbb P[\Theta \in A \mid \mathcal F_t^S], \ \ \ A\in \mathcal{B}(\mathbb{R}),\;\;\;0\leq t\leq T, \label{mu_t}
\end{equation}
and
\begin{equation}
\hat{\Theta}_t:=\mathbb E[\Theta\mid\mathcal F_t^S]=\int_{\mathbb{R}}\theta \, \mu_{t}(d \theta),\;\;\;0\leq t\leq T,\label{Theta_t}
\end{equation}
be, respectively, the $\Theta$'s conditional probability and expectation given $\mathcal F_t^S$. We will represent $X_t$ in (\ref{wealth}) and $\hat{\Theta}_t$ in (\ref{Theta_t}) (or, equivalently, $\mu_t$ in (\ref{mu_t})) as solutions of stochastic differential equations driven by a $\mathbb F^S$-Brownian motion.
To this aim, define the $(\mathbb P,\mathbb F^{\Theta,W})$-exponential martingale density process
\begin{equation}
Z_t:=\exp \left(-\Theta W_t-\frac{\Theta^2}{2}t \right),\;\;\;0\leq t\leq T,
\end{equation}
where  $\mathbb F^{\Theta,W}:=\{\mathcal F^{\Theta,W}_t,\;0\leq t\leq T\}$
is the $\mathbb P$-augmented filtration generated by $\mathbb W$ and $\Theta$ (notice that $\mathbb{F}^S$ is \textquotedblleft smaller\textquotedblright\ than $\mathbb{F}^{\Theta,W}$ in the sense that $\mathcal{F}_t^S \subset \mathcal{F}_t^{\Theta,W}$ for all $0\leq t\leq T$: \textit{partial observation}). Then,
by Girsanov theorem, the process $\mathbb{\tilde W}:=\{Y_t,\;0\leq t\leq T\}$, with
\begin{equation}
Y_t:=\Theta t + W_t,\;\;\;0\leq t\leq T, \label{process_Y}
\end{equation}
is a $\mathbb F^{\Theta,W}$-Brownian motion under the probability measure $\mathbb{\tilde P}$ defined by the Radon-Nikodym derivative
\begin{equation}
\frac{d\mathbb{\tilde P}}{d\mathbb P}=Z_T.\label{Radon-Nikodym}
\end{equation}
The processes $\{S_t,\;0\leq t\leq T\}$ and $\{Y_t,\;0\leq t\leq T\}$ generate the same filtration;
 hence, $\mathbb{\tilde W}$ is also a $\mathbb F^S$-Brownian motion independent of $\Theta$ under $\mathbb{\tilde P}$ and $\mathbb{\tilde P}[\Theta\in A]= \mathbb{P}[\Theta\in A]=\mu(A)$, for all $A\in\mathcal B(\mathbb R)$.
 Furthermore,
\begin{equation}
Z^{-1}_t=\exp \left(\Theta Y_t-\frac{\Theta^2}{2}t \right),\;\;\;0\leq t\leq T,
\end{equation}
is a $(\mathbb{\tilde P},\mathbb F^{\Theta,W})$-martingale
 and
\begin{equation}
\mathbb{\tilde E}[Z_T^{-1}\mid\mathcal F^S_t]=
\begin{sistema}
F(t,Y_t),\;\;0<t\leq T,\\ 1,\;\;\;\;\;\;\;\;\;\;\;\;t=0,
\end{sistema}
\end{equation}
where
\begin{equation}
F(t,y):=\int_{\mathbb R}\exp \left(\theta y-\frac{\theta^2}{2}t \right)\mu(d\theta),\;\;(t,y)\in(0,T]\times\mathbb R. \label{F(t,y)}
\end{equation}
Moreover,
\begin{equation}
\mathbb{\tilde E}[Z_T^{-1}\mathbf{1}_A(\Theta )\mid\mathcal F^S_t]=
\begin{sistema}
\int_Ae^{\theta y-\theta^2t/2}\mu(d\theta)\mid_{y=Y_t},\;\;0<t\leq T,\\ \mu(A),\;\;\;\;\;\;\;\;\;\;\;\;\;\;\;\;\;\;\;\;\;\;\;\;\;\;\;\;t=0,
\end{sistema}
\end{equation}
where $\mathbf{1}_A$ is the indicator function of $A\in\mathcal B(\mathbb R)$,
and, by Bayes' rule (see \citep{karatzas-shreve-book-91}, Lemma 5.3 p. 193),
\begin{equation}
\mu_t(A)=\frac{\mathbb{\tilde E}[Z_T^{-1}\mathbf{1}_A(\Theta )\mid\mathcal F^S_t]}{\mathbb{\tilde E}[Z_T^{-1}\mid\mathcal F^S_t]}, \; \; \; A\in \mathcal B(\mathbb R). \label{mu_t(A)}
\end{equation}
Then, the following representation for $\hat{\Theta}_t$ (see (\ref{Theta_t})) holds:  $\hat{\Theta}_0= \int_{\mathbb{R}}\theta \,\mu(d\theta)$ and\footnote{In the sequel, for a given function $\psi (x_1,x_2,x_3)$, $\psi^{\prime}_{x_i}$ denotes the first partial derivative w.r.t. $x_i$ and $\psi^{\prime\prime}_{x_i x_j}$ denotes the second partial derivative w.r.t. $x_i$, $x_j$ ($i,j=1,2,3$).}
\begin{equation}
\hat{\Theta}_t=\left. \int_{\mathbb R}\theta p(t,y,\theta)\mu(d\theta)\right|_{y=Y_t} =
\frac{F^{\prime}_y(t,Y_t)}{F(t,Y_t)},\;\;\; 0 < t \leq T,\label{theta_t}
\end{equation}
where
\begin{equation}
p(t,y,\theta):=\frac{e^{\theta y-\theta^2t/2}}{F(t,y)},\;\;\;(t,y,\theta)\in(0,T]\times\mathbb R\times\mathbb R. \label{p(t,y,theta)}
\end{equation}
Therefore, $p(t,y,\cdot)$ and
\begin{equation}
\hat{\Theta}(t,y):=\int_{\mathbb R}\theta p(t,y,\theta)\mu(d\theta),\;\;\; (t,y)\in(0,T]\times\mathbb{R},\label{theta_t-y}
\end{equation}
represent, respectively, $\Theta$'s conditional density, w.r.t. the dominating measure $\mu(d\theta)$, and $\Theta$'s conditional expected value if at time $t$ we observe $Y_t = y$.
Moreover, the representation $\hat{\Theta}_t=\hat{\Theta}(t,Y_t)$ implies that the process $\{Y_t,\;0\leq t\leq T\}$, whose differential is nothing but the \textit{observed} market price of risk since
\begin{equation}
dY_t=\frac{1}{\sigma}\left( \frac{dS_t}{S_t} -r dt\right),
\end{equation}
represents a \textit{sufficient statistics} for the estimation of $\Theta$ and, by It\^{o}'s rule,
\begin{equation}
d\hat{\Theta}_t = \hat{V}_{\Theta}(t) \left( dY_t - \hat{\Theta}_t dt \right),
\end{equation}
where
\begin{equation}
\hat{V}_{\Theta}(t):=\mathbb{E}\left[ \left(\Theta - \hat{\Theta}_t\right)^2 \mid \mathcal F^S_t \right]
\end{equation}
is the $\Theta$'s conditional variance. That is, the changes over time of the Bayes' estimator $\hat{\Theta}_t$ are opposite proportional, by a factor equal to the conditional variance, to the observed \textit{error}
\begin{equation}
\hat{\Theta}_t dt - dY_t.
\end{equation}
In other words, the estimation of $\Theta$ follows an adaptive learning process. Furthermore, the representation $\hat{p}_t(\theta):=p(t,Y_t,\theta)$ (see (\ref{p(t,y,theta)})) for $\Theta$'s conditional density and It\^{o}'s rule yield
\begin{equation}
d\hat{p}_t(\theta)=\hat{p}_t(\theta)(\theta -\hat{\Theta}_t)(dY_t -\hat{\Theta}_t dt),
\end{equation}
where the process $Y_t -\int_0^t\hat{\Theta}_s ds$, known as \textit{innovation} process in filtering theory, is a $\mathbb{F}^S$-brownian motion under the original probability measure $\mathbb{P}$.

Given the previous definitions, for $i\in \{\gamma, \exp, \log \}$ and $(t,x,y)\in [0,T]\times \mathbb R\times\mathbb R$ such that $xe^{r(T-t)}\in D_{i}$ (see (\ref{power-utility}), (\ref{log-utility}), and (\ref{exp-utility})), the original investment problem (\ref{wealth})-(\ref{original-functional}) is equivalent to the following markovian problem: maximize
\begin{equation}
\mathbb{\tilde E}^{t,x,y}\left[F(T,Y_T)u_{i}\left(X_T\right)\right],
\end{equation}
over all $\pi\in\mathcal A_t$,  subject to
\begin{equation}
\begin{sistema}\medskip dX_s=rX_sds+\pi_s\sigma d\tilde W_s,\;\;X_t=x\\
dY_s=d\tilde W_s,\;\;\;\;\;\;\;\;\;\;\;\;\;\;\;\;\;\;\;\;\;\;Y_t=y,
\end{sistema}\label{x-y}
\end{equation}
where $\mathbb{\tilde E}^{t,x,y}$ denotes the conditional expectation, w.r.t. $\mathbb{\tilde P}$, given $X_t=x$ and $Y_t=y$. 
The value function is
\begin{equation}
\hat{\mathcal{V}}_{i}(t,x,y):=\sup_{\pi\in\mathcal A_t}\mathbb{\tilde E}^{t,x,y}\left[F(T,Y_T)u_{i}\left(X_T\right)\right],\label{value function-2}
\end{equation}
 and, under appropriate regularity conditions, the dynamic programming principle 
 yields the following Hamilton-Jacobi-Bellman equation for $\hat{\mathcal{V}}_{i}$:
\begin{equation}
f^{\prime}_t+\sup_{\pi \in \mathbb R}\left\{\frac{1}{2}\sigma^2\pi^2f^{\prime\prime}_{xx}+\sigma\pi f^{\prime\prime}_{xy}+\frac{1}{2}f^{\prime\prime}_{yy}+rxf^{\prime}_x \right\}=0, \label{HJB}
\end{equation}
for all $(t,x,y)\in[0,T)\times  \mathbb R \times\mathbb R$ s.t. $xe^{r(T-t)}\in D_{i}$, with boundary condition
\begin{equation}
f(T,x,y)=F(T,y)u_{i}(x),\quad(x,y)\in  D_{i}\times\mathbb R,  \label{HJB-boundary}
\end{equation}
$i\in \{\gamma, \exp, \log \}$. Assuming $f^{\prime\prime}_{xx}<0$, the maximization on the LHS of (\ref{HJB}) gives the following optimal value for $\pi$:
\begin{equation}
\pi =-\frac{f^{\prime\prime}_{xy}}{\sigma f^{\prime\prime}_{xx}}.\label{control-policy}
\end{equation}
By substituting (\ref{control-policy}) back into (\ref{HJB}), the latter reduces to
\begin{equation}
f^{\prime}_t-\frac{1}{2}\frac{\left(f^{\prime\prime}_{xy}\right)^2}{f^{\prime\prime}_{xx}}
+\frac{1}{2}f^{\prime\prime}_{yy}+rxf^{\prime}_x=0. \label{HJB_2}
\end{equation}

\textit{Power utilities}. If $\gamma<1, \gamma \neq 0$, we try a solution of the form
\begin{equation}
f(t,x,y)=u_{\gamma}\left( x e^{r(T-t)} \right)h(t,y)^{1-\gamma}.\label{guess}
\end{equation}
By computing the partial derivatives and substituting into (\ref{HJB_2}), it is found that (\ref{guess}) is a solution of (\ref{HJB_2})-(\ref{HJB-boundary}) if and only if $h$ solves
\begin{equation}
h^{\prime}_t+\frac{1}{2}h^{\prime\prime}_{yy}=0, \;\;\;(t,y)\in[0,T)\times\mathbb R ,\label{heat-equation-1}
\end{equation}
with boundary condition  $h(T,y)= F(T,y)^{1/(1-\gamma)}$, $y\in\mathbb R$.
An application of Feynman-Kac formula yields the following solution for (\ref{heat-equation-1}):
\begin{equation}
\hat{h}(t,y;\gamma):=
\left\{
\begin{array}
[l]{l}%
 \int_{\mathbb R}  F(T,y+z)^{1/(1-\gamma)}\varphi_{T-t}(z)dz,\;\;\;(t,y)\in[0,T)\times\mathbb R \\
F(T,y)^{1/(1-\gamma)},\;\;\;\;\;\;\;\;\;\;\;\;\;\;\;\;\;\;\;\;\;\;\;\;\;\;\;\;\;(t,y)\in\{T\}\times\mathbb R,
\end{array}
\right. \label{h(t,y)}
\end{equation}
where $\varphi_{T-t}(\cdot)$ is the normal density with mean $0$ and variance $T-t$.
Then, a standard verification argument 
will enable us to prove the following representation for the value function (\ref{value function-2}) with $i=\gamma$:
 \begin{equation}
 \hat{\mathcal{V}}_{\gamma}(t,x,y)=u_{\gamma}\left( x e^{r(T-t)} \right)\hat{h}(t,y;\gamma)^{1-\gamma}, \label{value-function-gamma}
\end{equation}
for all $(t,x,y)\in[0,T]\times  \mathbb R \times\mathbb R$ s.t. $xe^{r(T-t)}\in D_{\gamma}$, and the optimality of the markov policy (see (\ref{control-policy}))
 \begin{equation}
\hat{\pi}_{\gamma}(t,x,y):=\left(\frac{x}{\sigma (1-\gamma)}+ \frac{\eta e^{-r(T-t)}}{\sigma \beta }\right)\int_{\mathbb R}\hat{\Theta}(T,y+z) q(t,y,z;\gamma)dz,\label{optimal-portfolio-partial}
\end{equation}
$(t,x,y)\in[0,T)\times  \mathbb R \times\mathbb R$ s.t. $xe^{r(T-t)}\in D_{\gamma}$,  where $\hat{\Theta}(\cdot,\cdot)$ is defined in (\ref{theta_t-y}) and
\begin{equation}
q(t,y,z;\gamma):=\frac{F(T,y+z)^{1/(1-\gamma)}\varphi_{T-t}(z)}{\int_{\mathbb R}\,F(T,y+z)^{1/(1-\gamma)}\varphi_{T-t}(z)dz}, \label{q(t,y,z,gamma)}
\end{equation}
$(t,y,z)\in[0,T)\times \mathbb R\times\mathbb R$. Notice that $q(t,y, \cdot;\gamma)$ is a probability density function w.r.t. the Lebesgue measure for all $(t,y)\in[0,T)\times\mathbb R $ and for any $\gamma < 1$.

\begin{remark}\label{remark-segno-primo-fattore-portafoglio}
Let $\{X^{\ast}_s,\;t\leq s\leq T\}$ be the optimal wealth process in (\ref{x-y}) once we substitute the optimal feedback policy (\ref{optimal-portfolio-partial}). Then, by It\^{o}'s rule,
\[
\tilde{\mathbb{P}}(\left\{X^{\ast}_s e^{r(T-s)} \in D_{\gamma},\;t\leq s \leq T \right\})=1,
 \]
 if the initial condition $(t,x)$ is such that $xe^{r(T-t)}\in D_{\gamma}$ (cf. \citep{merton-71}). Therefore, the expression on the RHS of (\ref{value-function-gamma}) is well-defined and the first factor on the RHS of (\ref{optimal-portfolio-partial}) is positive along the optimal wealth process.
\end{remark}

\textit{Logarithmic and Exponential utilities}. Similar computations show that the value functions for the logarithmic and exponential utilities (see (\ref{log-utility}) and (\ref{exp-utility})) are, respectively,
\begin{equation}
\begin{aligned}
 \hat{\mathcal{V}}_{\log}(t,x,y)= & \; F(t,y)u_{\log}(xe^{r(T-t)})-F(t,y)\ln F(t,y)  \\
 + & \int_{\mathbb R}F(T,z+y)\ln F(T,z+y)\varphi_{T-t}(z)dz   \label{value-function-log}
\end{aligned}
\end{equation}
and
\begin{equation}
 \hat{\mathcal{V}}_{\exp}(t,x,y)= u_{\exp}(xe^{r(T-t)})\exp{\left(\int_{\mathbb R}\ln F(T,z+y)\varphi_{T-t}(z)dz\right)}. \label{value-function-exp}
\end{equation}
The corresponding optimal portfolios are, respectively,
\begin{equation}
\hat{\pi}_{\log}(t,x,y) := \left(\frac{x}{\sigma}+ \frac{\eta e^{-r(T-t)}}{ \sigma \beta}\right)\hat{\Theta}(t,y)\label{optimal-portfolio-partial-log}
\end{equation}
and
\begin{equation}
\hat{\pi}_{\exp}(t,x,y) := \frac{e^{-r(T-t)}}{\sigma \beta} \,\int_{\mathbb R}\hat{\Theta}(T,y+z)\varphi_{T-t}(z)dz,\label{optimal-portfolio-partial-exp}
\end{equation}
where, again, $\hat{\Theta}(\cdot,\cdot)$ is defined in (\ref{theta_t-y}).
The following Proposition proves that the logarithmic and exponential portfolios (\ref{optimal-portfolio-partial-log}) and (\ref{optimal-portfolio-partial-exp}) are the limits, respectively as $\gamma \rightarrow 0$ and as $\gamma \rightarrow -\infty$, of the power portfolio (\ref{optimal-portfolio-partial}).

\begin{proposition}\label{proposition-1}
(i) For any $(t,x,y)\in[0,T)\times  \mathbb R \times\mathbb R$ such that $xe^{r(T-t)}\in D_{\log}$, we have
\begin{equation}
\lim_{\gamma\to 0}\hat{\pi}_{\gamma}(t,x,y)=\hat{\pi}_{\log}(t,x,y).\label{log_portfolio}
\end{equation}
(ii) For each $(t,x,y)\in[0,T)\times D_{\exp} \times\mathbb R$, the exponential portfolio is the limit, as $\gamma \rightarrow -\infty$, of the power portfolio with $\eta =1$, that is
\begin{equation}
\lim_{\gamma\to -\infty} \left( \left. \hat{\pi}_{\gamma}(t,x,y)\right|_{\eta=1} \right)=\hat{\pi}_{\exp}(t,x,y).\label{exp_portfolio}
\end{equation}
\end{proposition}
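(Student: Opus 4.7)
The plan is to track, in each limit, how the three ``moving parts'' of the power-portfolio formula (\ref{optimal-portfolio-partial}) behave: the $x$-dependent prefactor, the exponent $1/(1-\gamma)$ appearing in the kernel $q(t,y,z;\gamma)$, and (through that kernel) the averaging integral against $\hat{\Theta}(T,y+z)$. In both parts the prefactor converges by inspection: as $\gamma\to 0$ the factor $1/(\sigma(1-\gamma))$ tends to $1/\sigma$, giving the log prefactor; as $\gamma\to-\infty$ with $\eta=1$ the first term vanishes and we are left with $e^{-r(T-t)}/(\sigma\beta)$, the exponential prefactor. So the work is entirely in the integral factor.

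For part (i) I would first note the identity
\begin{equation}
\int_{\mathbb R} F(T,y+z)\,\varphi_{T-t}(z)\,dz=F(t,y),
\end{equation}
which is just Fubini plus the Gaussian moment generating function applied to the definition (\ref{F(t,y)}) of $F$. Hence $q(t,y,z;0)=F(T,y+z)\varphi_{T-t}(z)/F(t,y)$. Differentiating the same identity under the integral sign in $y$ yields $\int F'_y(T,y+z)\varphi_{T-t}(z)\,dz=F'_y(t,y)$. Combining these with $\hat{\Theta}(T,y+z)F(T,y+z)=F'_y(T,y+z)$ (see (\ref{theta_t})) gives
\begin{equation}
\int_{\mathbb R}\hat{\Theta}(T,y+z)\,q(t,y,z;0)\,dz=\frac{F'_y(t,y)}{F(t,y)}=\hat{\Theta}(t,y),
\end{equation}
which is the integral factor appearing in $\hat{\pi}_{\log}$. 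Passing $\gamma\to 0$ inside the numerator and denominator of $q(t,y,z;\gamma)$ is by dominated convergence (the pointwise convergence $F(T,y+z)^{1/(1-\gamma)}\to F(T,y+z)$ is immediate since $F>0$).

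For part (ii), with $\eta=1$, the pointwise limit as $\gamma\to-\infty$ is $F(T,y+z)^{1/(1-\gamma)}\to 1$, so the numerator and denominator of $q(t,y,z;\gamma)$ converge to $\int \hat{\Theta}(T,y+z)\varphi_{T-t}(z)\,dz$ and $1$ respectively (again by dominated convergence). Multiplying by the limit of the prefactor produces $\hat{\pi}_{\exp}$.

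The main obstacle in both parts is the dominated-convergence step, since $F(T,y+z)^{1/(1-\gamma)}$ and the weight $\hat{\Theta}(T,y+z)$ must be controlled uniformly in $\gamma$ over a full neighbourhood of the limit, and both can grow in $z$ depending on the tails of $\mu$. I would handle it by restricting $\gamma$ to a neighbourhood of $0$ in part (i) (so $1/(1-\gamma)$ is bounded in a compact interval around $1$) and to $\gamma$ sufficiently negative in part (ii) (so $1/(1-\gamma)\in(0,\varepsilon)$), then using the convexity/log-convexity of $a\mapsto F^{a}$ to dominate $F^{1/(1-\gamma)}$ by $1+F^{a_0}$ for some fixed $a_0$ in that interval; under the implicit integrability assumptions that make the portfolio (\ref{optimal-portfolio-partial}) well-defined in the first place (the kind of moment conditions alluded to in the footnote after the definition of $\mathcal A_t$), this bound is $\varphi_{T-t}$-integrable in $z$ and $\hat{\Theta}(T,y+z)$ times this bound is also integrable by the same token, which closes the argument.
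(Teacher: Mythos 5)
Your proposal is correct and follows essentially the same route as the paper: dominated convergence to pass the limit through the $\gamma$-dependence of $q(t,y,z;\gamma)$, and then, for part (i), Fubini plus the Gaussian moment-generating-function identity to show $\int_{\mathbb R}\hat{\Theta}(T,y+z)\,q(t,y,z;0)\,dz=\hat{\Theta}(t,y)$ (your semigroup identities for $F$ and $F'_y$ are just a repackaging of the paper's explicit double-integral computation, cf.\ Remark \ref{remark_theta_t-y_representation}). Your extra care in exhibiting a dominating function uniform in $\gamma$ near the limit is a welcome refinement of a step the paper leaves implicit.
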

\begin{proof}
(i) The result follows by observing that
\begin{align*}
\lim_{\gamma\to 0}\int_{\mathbb R}\hat{\Theta}(T,y+z) q(t,y,z;\gamma)dz = & \int_{\mathbb R}\hat{\Theta}(T,y+z) \frac{F(T,y+z)\varphi_{T-t}(z)}{\int_{\mathbb R}\,F(T,y+z)\varphi_{T-t}(z)dz} \,dz \\
= &\; \frac{\int_{\mathbb R}\,\int_{\mathbb R}\,\theta e^{\theta(z+y)-\theta^2T/2}\mu(d\theta)\varphi_{T-t}(z)dz}{\int_{\mathbb R}\,\int_{\mathbb R}\, e^{\theta(z+y)-\theta^2T/2}\mu(d\theta)\varphi_{T-t}(z)dz} \\
= &\; \frac{\int_{\mathbb R}\,\int_{\mathbb R}\,\theta e^{\theta(z+y)-\theta^2T/2}\varphi_{T-t}(z)dz\mu(d\theta)}{\int_{\mathbb R}\,\int_{\mathbb R}\, e^{\theta(z+y)-\theta^2T/2}\varphi_{T-t}(z)dz\mu(d\theta)}\\
= & \;\frac{\int_{\mathbb R}\,\theta e^{\theta y-\theta^2t/2}\mu(d\theta)}{\int_{\mathbb R}\,e^{\theta y-\theta^2t/2}\mu(d\theta)} =\; \hat{\Theta}(t,y),
\end{align*}
where the first equality is a consequence of the dominated convergence theorem, and the third makes use of Fubini's theorem.\\
(ii) Again, the result follows by observing that, by the dominated convergence theorem,
\[
\lim_{\gamma\to -\infty}\int_{\mathbb R}\hat{\Theta}(T,y+z) q(t,y,z;\gamma)dz = \int_{\mathbb R}\hat{\Theta}(T,y+z) \varphi_{T-t}(z)\,dz.
\]
\end{proof}
\begin{remark} \label{remark_theta_t-y_representation}
Item (i) of previous proposition proves the following representation for $\hat{\Theta}(t,y)$ in (\ref{theta_t-y}):
\begin{equation}
\hat{\Theta}(t,y)
=\int_{\mathbb R}\hat{\Theta}(T,y+z) q(t,y,z;0)dz, \label{theta_t-y_representation-eq}
\end{equation}
for all $(t,y)\in[0,T)\times \mathbb R$.
\end{remark}

\section{Learning and portfolio decision}\label{portfolio-comparison}

In this section we analyse in more details the results of the previous section and show a number of properties of the optimal portfolios. In particular, we explore the effect of learning on portfolio decisions by comparing the portfolios (\ref{optimal-portfolio-partial}), (\ref{optimal-portfolio-partial-log}), and (\ref{optimal-portfolio-partial-exp}) with the so called \textit{myopic} strategies. We say that an investor is myopic if she/he refrains from learning about the true distribution of $\Theta$ in the sense that, having observed $Y_t=y$ at time $t$, he behaves as the expected value $\hat{\Theta}(t,y)$ were certain. In other words, the investors first solve the maximization problem by considering the market price of risk a parameter -- say, $\Theta \equiv \theta$, $\theta \in \mathbb R$ -- and then replace the latter with its conditional expected value in the optimal investment policy. The optimal Merton's portfolios associated to a market price of risk constant and equal to $\theta$ are the following: for the power utility investor,
\begin{equation}
\pi_{\gamma}^{Merton}(t,x,y;\theta):=\left(\frac{x}{\sigma(1-\gamma)}+ \frac{\eta e^{-r(T-t)}}{\sigma \beta}\right)\theta ,\label{optimal-portfolio-merton}
\end{equation}
$\gamma<1, \gamma \neq 0$; for the logarithmic and exponential investor we have, respectively,
\begin{equation}
\pi^{Merton}_{\log}(t,x,y;\theta):=\left(\frac{x}{\sigma}+ \frac{\eta e^{-r(T-t)}}{\sigma \beta }\right)\theta \label{optimal-portfolio-merton-log}
\end{equation}
and
\begin{equation}
\pi^{Merton}_{\exp}(t,x,y;\theta): = \frac{e^{-r(T-t)}}{ \sigma\beta } \,\theta. \label{optimal-portfolio-merton-exp}
\end{equation}
Notice that also in the Merton's model the logarithmic and exponential portfolios are the limits, respectively as $\gamma \rightarrow 0$ and as $\gamma \rightarrow -\infty$, of the power portfolio.
\begin{remark}\label{remark-certainty-equivalence-principle}
For a logarithmic investor the portfolio under partial observation (\ref{optimal-portfolio-partial-log}) is obtained by substituting into Merton's portfolio (\ref{optimal-portfolio-merton-log}) the parameter $\theta$ with its conditional expected value $\hat{\Theta}(t,y)$ (see (\ref{theta_t-y})), that is,
\begin{equation}
\hat{\pi}_{\log}(t,x,y)=\pi^{Merton}_{\log}(t,x,y;\hat{\Theta}(t,y)).
\end{equation}
This is the so-called \emph{certainty equivalence principle} and does not hold for other types of preferences (see \citep{kuwana-95}). In other words, the logarithmic investor is myopic.
\end{remark}
Now, the myopic strategies for the power, logarithmic and exponential investors are thus defined:
\begin{equation}
\pi_{\gamma}^{m}(t,x,y):=\pi_{\gamma}^{Merton}(t,x,y;\hat{\Theta}(t,y))=\left(\frac{x}{\sigma(1-\gamma)}+ \frac{\eta e^{-r(T-t)}}{\sigma \beta}\right)\hat{\Theta}(t,y) , \label{optimal-portfolio-merton-con-Theta}
\end{equation}
\begin{equation}
\pi_{\log}^{m}(t,x,y):=\pi_{\log}^{Merton}(t,x,y;\hat{\Theta}(t,y))=\left(\frac{x}{\sigma}+ \frac{\eta e^{-r(T-t)}}{\sigma \beta }\right)\hat{\Theta}(t,y), \label{optimal-portfolio-merton-log-con-Theta}
\end{equation}
and
\begin{equation}
\pi_{\exp}^{m}(t,x,y):=\pi_{\exp}^{Merton}(t,x,y;\hat{\Theta}(t,y))= \frac{e^{-r(T-t)}}{ \sigma\beta } \,\hat{\Theta}(t,y). \label{optimal-portfolio-merton-exp-con-Theta}
\end{equation}

The following proposition shows that all HARA investors become myopic as time elapses.
\begin{proposition}\label{proposition-limit-T}
For each class of HARA utility functions (\ref{power-utility}), (\ref{log-utility}), and (\ref{exp-utility}), the portfolio under partial observation tends, as $t \rightarrow T$, to the corresponding myopic portfolio. That is, for each $i\in \{\gamma, \exp, \log \}$,
\begin{equation}
\lim_{t\to T}\hat{\pi}_{i}(t,x,y)=\pi^{m}_{i}(T,x,y), \;\;\; (x,y)\in D_{i} \times \mathbb R . \label{portfolio-at-T}
\end{equation}
\end{proposition}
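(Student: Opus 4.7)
The plan is to handle the three utility families by a single idea: each portfolio formula differs from its myopic counterpart only through an integral of $\hat{\Theta}(T,\cdot)$ against a density that concentrates at a single point as $t\to T$, plus some elementary continuous prefactors. So the whole result reduces to an approximate-identity argument.

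The logarithmic case is essentially free. By Remark \ref{remark-certainty-equivalence-principle} we already have $\hat{\pi}_{\log}(t,x,y)=\pi^{m}_{\log}(t,x,y)$ for every $t<T$, so I only need to verify that $\hat{\Theta}(t,y)=F'_y(t,y)/F(t,y)$ is continuous in $t$ at $t=T$. This follows from dominated convergence in the integrals defining $F$ and $F'_y$, using the Gaussian factor $e^{-\theta^2 t/2}$ (for $t$ in a small neighbourhood of $T>0$) as the $\mu$-integrable dominator.

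For the power case (and the exponential case, which will follow as the limit $\gamma\to -\infty$ where $q(t,y,z;\gamma)$ reduces to $\varphi_{T-t}(z)$) the entire content of the statement reduces to
\[
\lim_{t\to T}\int_{\mathbb R}\hat{\Theta}(T,y+z)\,q(t,y,z;\gamma)\,dz=\hat{\Theta}(T,y),
\]
since the prefactor $x/(\sigma(1-\gamma))+\eta e^{-r(T-t)}/(\sigma\beta)$ is trivially continuous at $t=T$. I would rescale $z=\sqrt{T-t}\,u$, which rewrites $q(t,y,z;\gamma)\,dz$ as
\[
\frac{F(T,y+\sqrt{T-t}\,u)^{1/(1-\gamma)}\,\varphi_1(u)}{\int_{\mathbb R}F(T,y+\sqrt{T-t}\,u')^{1/(1-\gamma)}\,\varphi_1(u')\,du'}\,du,
\]
so that pointwise in $u$ the numerator's integrand tends to $\hat{\Theta}(T,y)\,F(T,y)^{1/(1-\gamma)}\,\varphi_1(u)$ and the denominator's to $F(T,y)^{1/(1-\gamma)}\,\varphi_1(u)$, by continuity of $F(T,\cdot)$ and $\hat{\Theta}(T,\cdot)$ (both smooth in $y$ through the defining integrals).

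The one genuinely technical step is the dominated-convergence bound. Fixing $\varepsilon>0$ and restricting to $T-t\le\varepsilon$, one has $|\sqrt{T-t}\,u|\le\sqrt{\varepsilon}\,|u|$, and the crude estimate $F(T,y+z)\le \int_{\mathbb R}\exp(|\theta|(|y|+|z|)-\theta^2T/2)\,\mu(d\theta)$ gives a quantity growing at most like $e^{c|u|}$; raising to $1/(1-\gamma)>0$ preserves this sub-Gaussian behaviour, so the denominator's integrand is dominated by a fixed $L^1(\mathbb R,\varphi_1)$ function. The same device, applied to $|F'_y(T,y+\sqrt{T-t}\,u)|\,F(T,y+\sqrt{T-t}\,u)^{1/(1-\gamma)-1}=|\hat{\Theta}(T,y+\sqrt{T-t}\,u)|\,F(T,y+\sqrt{T-t}\,u)^{1/(1-\gamma)}$, handles the numerator. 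Once both bounds are in place, dominated convergence forces numerator and denominator to their pointwise limits, and the common factor $F(T,y)^{1/(1-\gamma)}$ cancels, yielding $\hat{\Theta}(T,y)$. Combined with the elementary limit of the prefactor, this delivers (\ref{portfolio-at-T}) in all three cases.
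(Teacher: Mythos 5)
Your proof is correct in outline and takes essentially the same route as the paper: both reduce the power case to showing $\lim_{t\to T}\int_{\mathbb R}\hat{\Theta}(T,y+z)\,q(t,y,z;\gamma)\,dz=\hat{\Theta}(T,y)$ and obtain this from the concentration of $q(t,y,\cdot;\gamma)$ at $z=0$; the paper's device of writing the integral as a ratio of expectations of functions of $W_{T-t}$ with $W_{T-t}\to 0$ a.s.\ is just your substitution $z=\sqrt{T-t}\,u$ in probabilistic form, and you are in fact more explicit than the paper about the domination needed to pass to the limit. One intermediate claim, however, is wrong as stated: the estimate $F(T,y+z)\le\int_{\mathbb R}\exp(|\theta|(|y|+|z|)-\theta^2T/2)\,\mu(d\theta)$ does \emph{not} give growth of order $e^{c|u|}$ when $\mu$ has unbounded support --- for the Gaussian prior of Subsection \ref{subsection-gaussian-prior} one has $F(T,y)\asymp\exp\bigl(v^2y^2/(2(1+v^2T))\bigr)$, i.e.\ genuinely quadratic exponent. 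The bound you actually obtain, via $|\theta||z|\le\theta^2T/4+z^2/T$, is $F(T,y+z)\le C(y,T)\,e^{z^2/T}$ with $C(y,T)=\int_{\mathbb R}e^{|\theta||y|-\theta^2T/4}\mu(d\theta)<\infty$; after the rescaling this gives $F(T,y+\sqrt{T-t}\,u)^{1/(1-\gamma)}\le C'\exp\bigl((T-t)u^2/(T(1-\gamma))\bigr)$, which is dominated by a fixed $L^1(\mathbb R,\varphi_1)$ function once you restrict to $T-t\le\varepsilon$ with $\varepsilon<T(1-\gamma)/2$ --- so the restriction to small $\varepsilon$ that you introduced is not cosmetic but exactly what saves the argument. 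The numerator needs the analogous quadratic-exponent bound together with a crude lower bound on $F$ (e.g.\ $F(T,w)\ge\mu([-R,R])\,e^{-R|w|-R^2T/2}$ for any $R$ with $\mu([-R,R])>0$) to control $|\hat{\Theta}(T,\cdot)|$, but the same choice of small $\varepsilon$ closes it. With this repair the proof is complete and matches the paper's.
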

\begin{proof}
There is nothing to prove for the logarithmic utility. For the power utility, by recalling (\ref{optimal-portfolio-partial}) and (\ref{optimal-portfolio-merton-con-Theta}), it suffices to show that, for any $\gamma < 1$ and any $y\in \mathbb R$,
\[
\lim_{t\to T} \;\int_{\mathbb R}\hat{\Theta}(T,y+z) q(t,y,z;\gamma)dz = \hat{\Theta}(T,y).
\]
This is established by using the representation
\[
\int_{\mathbb R}\hat{\Theta}(T,y+z) q(t,y,z;\gamma)dz =\frac{\mathbb E \left[\hat{\Theta}(T,y+W_{T-t})F(T,y+W_{T-t})^{1/(1-\gamma)}\right]}{\mathbb E \left[F(T,y+W_{T-t})^{1/(1-\gamma)}\right]},
\]
together with the continuity of the functions $\hat{\Theta}$ and $F$ (see, respectively, (\ref{theta_t-y}) and (\ref{F(t,y)})), and observing that $W_{T-t}\to 0$, almost surely, as $t\to T$. For the exponential utility the proof proceeds similarly.
\end{proof}

The next result and its implications generalize Theorem 6 in \citep{rieder-bauerle} to the case of a market price of risk described by a general random variable.
\begin{theorem}\label{theorem}
Assume the random variable $\Theta$ is constant in sign $\mu-a.s.$ (i.e., either $\Theta$ takes only positive values or only negative values\footnote{Presumably, the case of negative values is not the most realistic situation from a financial point of view, albeit it cannot be excluded.}). Then, for any fixed $(t,y)\in[0,T)\times\mathbb R$ and $x \in D_{\gamma}$, the ratio
\begin{equation}
\frac{\hat{\pi}_{\gamma}(t,x,y)}{\pi^{m}_{\gamma}(t,x,y)}, \;\;\; \gamma <1,
\end{equation}
is positive, has limit $1$ as $\gamma \rightarrow 0$, and is increasing with respect to $\gamma$.
\end{theorem}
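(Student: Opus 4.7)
My plan is to start by observing that the common prefactor $x/[\sigma(1-\gamma)] + \eta e^{-r(T-t)}/(\sigma\beta)$ in $\hat{\pi}_\gamma$ and $\pi^m_\gamma$ cancels exactly, so that, using the representation of $\hat{\Theta}(t,y)$ given in Remark \ref{remark_theta_t-y_representation}, the statement reduces to the study of
\[
R(\gamma) \;:=\; \frac{N(\gamma)}{N(0)}, \qquad N(\gamma) \;:=\; \int_{\mathbb R} \hat{\Theta}(T, y+z)\, q(t,y,z;\gamma)\, dz, \qquad \gamma < 1.
\]
Positivity is then immediate: under the constant-sign hypothesis the posterior density $p(T,\cdot,\cdot)$ (see \eqref{p(t,y,theta)}) is supported on the same half-line as $\Theta$, so $\hat{\Theta}(T,\cdot)$ has the same constant sign; combined with the positivity of $q(t,y,\cdot;\gamma)$ this forces $N(\gamma)$ and $N(0)$ to share that sign, giving $R(\gamma) > 0$. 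The limit $R(\gamma)\to 1$ as $\gamma\to 0$ follows by pointwise convergence $q(t,y,z;\gamma)\to q(t,y,z;0)$ and dominated convergence.

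The heart of the proof, and what the keyword \emph{likelihood ratio order} hints at, is the monotonicity in $\gamma$. I plan to represent $N(\gamma) = \mathbb E[g(Z_\gamma)]$ with $g(z):=\hat{\Theta}(T,y+z)$ and $Z_\gamma$ a random variable with density $q(t,y,\cdot;\gamma)$, and then show that $\{Z_\gamma\}_\gamma$ is monotone in the likelihood ratio order. A direct computation from \eqref{q(t,y,z,gamma)} gives, for $\gamma_1 < \gamma_2 < 1$,
\[
\frac{q(t,y,z;\gamma_2)}{q(t,y,z;\gamma_1)} \;=\; C(\gamma_1,\gamma_2)\cdot F(T, y+z)^{\alpha}, \qquad \alpha \;:=\; \frac{1}{1-\gamma_2} - \frac{1}{1-\gamma_1} \;=\; \frac{\gamma_2-\gamma_1}{(1-\gamma_1)(1-\gamma_2)} \;>\; 0,
\]
with $C$ a normalizing constant. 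The constant-sign hypothesis enters through $F'_y(T,y) = \int \theta\, e^{\theta y - \theta^2 T/2}\mu(d\theta)$ (see \eqref{F(t,y)}), which inherits the sign of $\Theta$, so $F(T,\cdot)$ is strictly monotone—increasing if $\Theta>0$ $\mu$-a.s., decreasing if $\Theta<0$ $\mu$-a.s. Consequently the likelihood ratio above is monotone in $z$, and one obtains $Z_{\gamma_2}\geq_{lr} Z_{\gamma_1}$ in the positive case and the reverse in the negative case.

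The second ingredient is the monotonicity of $g$ itself. Differentiating \eqref{theta_t-y} and using $\partial_y p(T,y,\theta) = p(T,y,\theta)(\theta - \hat{\Theta}(T,y))$ yields
\[
\partial_y \hat{\Theta}(T,y) \;=\; \int_{\mathbb R} \bigl(\theta - \hat{\Theta}(T,y)\bigr)^{\!2}\, p(T,y,\theta)\, \mu(d\theta) \;\geq\; 0,
\]
so $g$ is nondecreasing regardless of the sign of $\Theta$. Combining: in the positive case $Z_{\gamma_2}\geq_{lr} Z_{\gamma_1}$ implies $Z_{\gamma_2}\geq_{st} Z_{\gamma_1}$, hence $N(\gamma_2)\geq N(\gamma_1)$, and dividing by the positive quantity $\hat{\Theta}(t,y)$ preserves the inequality; in the negative case the likelihood ratio order reverses but $\hat{\Theta}(t,y)<0$, and the two sign flips produce the same conclusion that $R$ is nondecreasing in $\gamma$.

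The main obstacle I anticipate is the dominated-convergence step for $\gamma\to 0$ and, more generally, the verification that all the integrals $\int F(T,y+z)^{1/(1-\gamma)}\varphi_{T-t}(z)\,dz$ are finite and well-behaved in $\gamma$: the standing hypothesis $\int |\theta|\,\mu(d\theta)<\infty$ controls $\hat{\Theta}$ but not automatically higher moments of $F(T,\cdot)^{1/(1-\gamma)}$, so some uniform integrability care is needed. The likelihood-ratio calculation and the two sign bookkeepings, by contrast, are straightforward once the explicit form of $q$ in \eqref{q(t,y,z,gamma)} is exploited.
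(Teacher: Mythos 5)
Your proposal is correct and follows essentially the same route as the paper: cancel the common prefactor, establish the likelihood-ratio ordering of the densities $q(t,y,\cdot;\gamma)$ via the factor $F(T,y+z)^{\alpha}$, and combine it with the monotonicity of $z\mapsto\hat{\Theta}(T,y+z)$ and the sign bookkeeping. The only (harmless) difference is that you prove the monotonicity of $\hat{\Theta}(T,\cdot)$ by the direct variance computation $\partial_y\hat{\Theta}=F''_{yy}/F-(F'_y/F)^2\geq 0$, whereas the paper derives it from a second likelihood-ratio-order argument on the posterior densities $p(t,y,\theta)$.
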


\begin{proof}
The assertions that $\hat{\pi}_{\gamma}/\pi^{m}_{\gamma}$ is positive and has limit equal to $1$, as $\gamma \rightarrow 0$, follow by recalling definition (\ref{theta_t-y}), representation (\ref{theta_t-y_representation-eq}) and observing that \begin{equation}
\frac{\hat{\pi}_{\gamma}(t,x,y)}{\pi^{m}_{\gamma}(t,x,y)}= \frac{\int_{\mathbb R}\hat{\Theta}(T,y+z) q(t,y,z;\gamma)dz}{\hat{\Theta}(t,y)}= \frac{\int_{\mathbb R}\hat{\Theta}(T,y+z) q(t,y,z;\gamma)dz}{\int_{\mathbb R}\hat{\Theta}(T,y+z) q(t,y,z;0)dz}, \label{theorem-proof-1}
\end{equation}

To prove the monotonicity of $\hat{\pi}_{\gamma}/\pi^{m}_{\gamma}$ w.r.t. $\gamma$, it suffices to show that, for any fixed $(t,y)\in[0,T)\times\mathbb R $,
\begin{equation}
f(t,y,\gamma):= \int_{\mathbb R}\hat{\Theta}(T,y+z) q(t,y,z;\gamma)dz, \;\;\; \gamma <1,
\end{equation}
as a function of $\gamma$, is increasing if $\Theta$ is positive and decreasing if $\Theta$ is negative. To this end, let $\gamma_1 <\gamma_2 <1$ and consider the ratio of densities (w.r.t. $z$)
\begin{align*}
\frac{ q(t,y,z;\gamma_2)}{q(t,y,z;\gamma_1)}
= &\; \frac{\int_{\mathbb R}\,F(T,y+z)^{1/(1-\gamma_1)}\varphi_{T-t}(z)dz}{\int_{\mathbb R}\,F(T,y+z)^{1/(1-\gamma_2)}\varphi_{T-t}(z)dz}
\times \frac{F(T,y+z)^{1/(1-\gamma_2)}}{F(T,y+z)^{1/(1-\gamma_1)}}\\
= &\; C(t,y,\gamma_1,\gamma_2)\times F(T,y+z)^{k}
\end{align*}
where
\[
C(t,y,\gamma_1,\gamma_2):=\frac{\int_{\mathbb R}\,F(T,y+z)^{1/(1-\gamma_1)}\varphi_{T-t}(z)dz}{\int_{\mathbb R}\,F(T,y+z)^{1/(1-\gamma_2)}\varphi_{T-t}(z)dz}>0
\]
and
\[
k:=\frac{1}{1-\gamma_2}-\frac{1}{1-\gamma_1}>0.
\]
Hence, $q(t,y,z;\gamma_2)/q(t,y,z;\gamma_1)$ is increasing w.r.t. $z$ if and only if $F(T,y+z)$ is increasing w.r.t. $z$ and (see definition (\ref{F(t,y)})) $F(T,y+z)$, as a function of $z$, is increasing if $\Theta \geq0$ and decreasing if $\Theta \leq0$. Therefore, the random variable with density $q(t,y,\cdot;\gamma_2)$ dominates, with respect to the likelihood ratio order and hence in the usual stochastic order, the random variable with density $q(t,y,\cdot;\gamma_1)$ if $\Theta \geq0$, whereas the dominance is reversed if $\Theta \leq0$ (see \citep{bauerle-rieder-book-2011}, Appendix B.3). If $\hat{\Theta}(T,y+z)$ were increasing w.r.t. $z$, then the stochastic dominance just proved would imply
\begin{align*}
f(t,y,\gamma_1)= & \;\int_{\mathbb R}\hat{\Theta}(T,y+z) q(t,y,z;\gamma_1)dz \\
\leq & \; \int_{\mathbb R}\hat{\Theta}(T,y+z) q(t,y,z;\gamma_2)dz=f(t,y,\gamma_2),
\end{align*}
if $\Theta \geq0$, and the opposite inequality if $\Theta \leq0$. That is, the claimed monotonicity for $f$.

To complete the proof we need to show that $\hat{\Theta}(T,y+z)$ is increasing w.r.t. $z$. To see this, observe that for any $t\in(0,T]$ and any $y_1<y_2$, the ratio of densities, w.r.t. the dominating measure $\mu(d\theta)$ (see (\ref{p(t,y,theta)})),
\[
\frac{p(t,y_2,\theta)}{p(t,y_1,\theta)}=e^{\theta(y_2 -y_1)}D(t,y_1,y_2),
\]
where $D(t,y_1,y_2):=F(t,y_1)/F(t,y_2) >0$  (see (\ref{F(t,y)})), is increasing w.r.t. $\theta$. Hence, by the likelihood ratio order, we have
\[
\hat{\Theta}(t,y_1)=\int_{\mathbb R}\,\theta p(t,y_1,\theta) \mu(d\theta)\leq\int_{\mathbb R}\,\theta p(t,y_2,\theta) \mu(d\theta)=\hat{\Theta}(t,y_2),
\]
for any $t\in(0,T]$. In particular, $\hat{\Theta}(T,y+z)$ is increasing w.r.t. $z$.
\end{proof}

\begin{remark}\label{remark-abs-value-portfolio}
\emph{
If the market price of risk $\Theta$ is constant in sign then Theorem \ref{theorem} implies $\operatorname{sign}\left( \hat{\pi}_{\gamma} \right) = \operatorname{sign}\left(\pi^{m}_{\gamma} \right) =\operatorname{sign}\left( \Theta \right)$ and
\begin{equation} \label{monotonicity-abs-values}
\text{if}\; \gamma >0,\; \text{then} \; \left\vert \hat{\pi}_{\gamma}\right\vert \geq\left\vert \pi_{\gamma}
^{m}\right\vert, \; \text{and if}\; \gamma < 0, \;\text{then} \;
\left\vert \hat{\pi}_{\gamma}\right\vert \leq\left\vert \pi_{\gamma}%
^{m}\right\vert .
\end{equation}
The intuition for the previous result is as follows. Since agents can be either short or long on the market, it is the magnitude of the market price of risk rather than its sign that makes the risky asset attractive for \textit{aggressive} power investors (i.e., power investor with $0 < \gamma < 1$). That is why, for $0 < \gamma < 1$, non-negative market price of risk implies $\hat{\pi}_{\gamma}\geq \pi_{\gamma}^{m}$ and non-positive market price of risk implies $\hat{\pi}_{\gamma}\leq\pi_{\gamma}^{m}$.
\textit{Conservative} power investors (i.e. power investor with $\gamma < 0$) behave in the opposite direction (cf. \citep{brennan98}, see also \citep{kim-omberg-96}, Section 3). We further notice that $\left\vert \hat{\pi}_{\gamma}\right\vert$ itself is increasing w.r.t. $\gamma$; that is, the higher is the risk tolerance, the more is the wealth invested into the risky asset.}
\end{remark}

\begin{remark}\label{remark-limits}
\emph{
Let $\theta_1,\theta_2 \in \mathbb R$ and assume $0<\theta_1 \leq \Theta \leq \theta_2$, $\mu-a.s.$, then $\theta_1 \leq \hat{\Theta}(t,y) \leq \theta_2$   (see \ref{theta_t-y}), and
\begin{equation}
\frac{\theta_1}{\theta_2} \leq \frac{\hat{\pi}_{\gamma}(t,x,y)}{\pi^{m}_{\gamma}(t,x,y)} \leq \frac{\theta_2}{\theta_1}, \;\; \gamma <1,
\end{equation}
for all $(t,x,y) \in [0,T)\times D_{\gamma} \times \mathbb R$ (see \ref{theorem-proof-1}). Therefore,
\begin{equation}
\frac{\theta_1}{\theta_2} \leq \lim_{\gamma \rightarrow -\infty} \frac{\hat{\pi}_{\gamma}(t,x,y)}{\pi^{m}_{\gamma}(t,x,y)} \leq 1 \leq \lim_{\gamma \rightarrow 1^-} \frac{\hat{\pi}_{\gamma}(t,x,y)}{\pi^{m}_{\gamma}(t,x,y)} \leq \frac{\theta_2}{\theta_1}.
\end{equation}
Notice that the limits, as $\gamma \rightarrow -\infty$ and as $\gamma \rightarrow 1^-$, of $\hat{\pi}_{\gamma}/\pi^{m}$ depend, in general, on the distribution of $\Theta$.}
\end{remark}

\begin{remark}\label{remark-monotonicity-headging-demand}
\emph{
Theorem \ref{theorem} enables us to investigate the monotonicity of $\hat{\pi}_{\gamma}$ and $\hat{\pi}_{\gamma}-\pi_{\gamma}^{m}$ with respect to $\gamma$. In particular, the latter difference, in light of the decomposition
\begin{equation}
\hat{\pi}_{\gamma}=\pi_{\gamma}^{m} + (\hat{\pi}_{\gamma}-\pi_{\gamma}^{m}),
\end{equation}
might be considered a sort of hedging demand against market price of risk uncertainty (cf. \citep{brennan98} and \citep{honda-03}, where the analysis is performed numerically, see also \citep{cvitanic-lazrak-martellini-zapatero-06}). Without loss of generality, we present the case of a positive market price of risk, the results are reversed in the case of a negative market price of risk. Moreover, with respect to the parameter $\eta$, two cases are of interest: $\eta=0$ and $\eta=1$ (since all the other cases can be either treated similarly or reduced to one of them). For the sake of comparison, we also examine $\pi_{\gamma}^{m}$.}

\emph{
(a) Case $\eta=0$.
Here, we extend the results in \citep{rieder-bauerle}, Sections 7 and 8, to the case of a general random variable for the market price of risk.  We have $D_{\gamma}=\{x\in \mathbb R \mid x>0\}$,
\begin{equation}
\hat{\pi}_{\gamma}(t,x,y)=\frac{x}{\sigma (1-\gamma)}\int_{\mathbb R}\hat{\Theta}(T,y+z) q(t,y,z;\gamma)dz,
\end{equation}
and, by (\ref{theta_t-y_representation-eq}),
\begin{equation}
\hat{\pi}_{\gamma}(t,x,y)-\pi_{\gamma}^{m}(t,x,y)=\frac{x}{\sigma (1-\gamma)}\int_{\mathbb R}\hat{\Theta}(T,y+z)(q(t,y,z;\gamma)-q(t,y,z;0))dz. \label{portfolio-difference_eta=0}
\end{equation}
Notice that the first factor in the LHS of (\ref{portfolio-difference_eta=0}) is positive and increasing w.r.t. $\gamma$, and that the second factor is also increasing w.r.t. $\gamma$, but positive for $\gamma >0$ and negative for $\gamma <0$ (see Theorem \ref{theorem}'s proof).
Then, for any $(t,x,y)\in[0,T)\times  (0,\infty) \times\mathbb R$:
\begin{itemize}
\item[(i)] $\hat{\pi}_{\gamma}$ and $\pi_{\gamma}^{m}$ are both increasing w.r.t. $\gamma$,
\begin{equation}
\lim_{\gamma \rightarrow -\infty} \hat{\pi}_{\gamma} = \lim_{\gamma \rightarrow -\infty}\pi_{\gamma}^{m}=0^+, \; \;\; \text{and} \;\;\;\lim_{\gamma \rightarrow 1^{-}} \hat{\pi}_{\gamma} = \lim_{\gamma \rightarrow 1^{-}}\pi_{\gamma}^{m}=+\infty,
\end{equation}
provided, for the limit as $\gamma \rightarrow 1^{-}$, that the value function and the optimal portfolios exist in a left neighborhood of $\gamma =1$ (see Remark \ref{remark-existence-gaussian} ahead for this issue in the gaussian setting). Hence, our analysis confirms the  conjecture in \citep{rieder-bauerle}, p. 376, on the monotonicity of $\hat{\pi}_{\gamma}$, w.r.t. $\gamma$.
\item[(ii)] The difference $\hat{\pi}_{\gamma}-\pi_{\gamma}^{m}$ is such that
\begin{equation}
\lim_{\gamma \rightarrow -\infty} \hat{\pi}_{\gamma}-\pi_{\gamma}^{m} = \lim_{\gamma \rightarrow 0}\hat{\pi}_{\gamma}-\pi_{\gamma}^{m}=0 \;\; \text{and} \;\;
\lim_{\gamma \rightarrow 1^{-}}\hat{\pi}_{\gamma}-\pi_{\gamma}^{m}=+\infty,
\end{equation}
where, as before, the last limit relies on the existence of the portfolios in a left neighborhood of $\gamma =1$. Furthermore, there exists $-\infty < \underline{\gamma} \leq 0$ such that $\hat{\pi}_{\gamma}-\pi_{\gamma}^{m}$ is increasing for all  $\gamma \geq \underline{\gamma}$, albeit such a difference is not increasing for all $\gamma \in (-\infty, 1)$, contrary to what has been conjectured in \citep{rieder-bauerle}, p. 376. Notice that in \citep{rieder-bauerle} the portfolio represents the fraction of wealth invested in the risky asset rather than the amount, but this does not affect the validity of our results also in their setting.
\end{itemize}}

\emph{
(b) Case $\eta=1$.  Then,
\begin{equation}
\hat{\pi}_{\gamma}(t,x,y)=\left(\frac{x}{\sigma(1-\gamma)}+ \frac{e^{-r(T-t)}}{\sigma \beta}\right)\int_{\mathbb R}\hat{\Theta}(T,y+z) q(t,y,z;\gamma)dz,
\end{equation}
and
\begin{equation}
\begin{aligned}
\hat{\pi}_{\gamma}(t,x,y)-\pi_{\gamma}^{m}(t,x,y)= &\left(\frac{x}{\sigma(1-\gamma)}+ \frac{ e^{-r(T-t)}}{\sigma \beta}\right)\\
 \times & \int_{\mathbb R}\hat{\Theta}(T,y+z)(q(t,y,z;\gamma)-q(t,y,z;0))dz. \label{portfolio-difference_eta=1}
\end{aligned}
\end{equation}
By Remark \ref{remark-segno-primo-fattore-portafoglio}, the first factor in the LHS of (\ref{portfolio-difference_eta=1}) is positive, it is increasing w.r.t. $\gamma$, if $x\geq 0$, and (strictly) decreasing, if $x<0$.
\begin{itemize}
\item[--] If $x\geq 0$, the behaviors of $\hat{\pi}_{\gamma}$ and $\hat{\pi}_{\gamma}-\pi_{\gamma}^{m}$ are identical to case 3(a) except for the limits as $\gamma \rightarrow -\infty$, which are computed according to Proposition \ref{proposition-1}, and the monotonicity of $\hat{\pi}_{\gamma}-\pi_{\gamma}^{m}$, where, depending on the distribution of $\Theta$ and/or the value for $x$, we may have $ \underline{\gamma} =-\infty $ (i.e. the difference might be increasing in its entire domain).
\item[--] If $x<0$, then $\gamma \leq 1 + x\beta e^{r(T-t)} < 1$, the limits of $\hat{\pi}_{\gamma}$ and $\hat{\pi}_{\gamma}-\pi_{\gamma}^{m}$, as $\gamma \rightarrow -\infty$, are as before, and there exists $ \bar{\gamma}  < 1 + x\beta e^{r(T-t)} $ such that $\hat{\pi}_{\gamma}-\pi_{\gamma}^{m}$ is increasing for all  $\gamma \leq \bar{\gamma}$.
\end{itemize}
}
\end{remark}

\begin{remark}\label{remark-relative-hedging-demand}
\emph{
The quantity
\begin{equation}
\frac{\hat{\pi}_{\gamma}-\pi_{\gamma}^{m}}{\pi_{\gamma}^{m}}=\frac{\hat{\pi}_{\gamma}}{\pi_{\gamma}^{m}}-1
\end{equation}
could be defined as relative hedging demand for uncertainty, whose properties can be easily deduced from Theorem \ref{theorem} and remarks thereafter.
}
\end{remark}

If the market price of risk $\Theta$ is not constant in sign then, in general, Theorem \ref{theorem} and its Remarks
no longer hold -- an exception is the case of a gaussian prior (see Subsection \ref{subsection-gaussian-prior} ahead) -- as the numerical example in \citep{rieder-bauerle}, Figure 2 page 376, shows.

For the sake of completeness, we present also the portfolio ratios for logarithmic and exponential investors. By Remark \ref{remark-certainty-equivalence-principle}, $\hat{\pi}_{\log}(t,x,y)/\pi^{m}_{\log}(t,x,y)$ is identically equal to $1$. Whereas, for exponential preferences, if $\Theta$ is constant in sign, the stochastic order technique used in the proof of Theorem \ref{theorem}, item (ii), proves
\begin{equation}
0 \leq \frac{\hat{\pi}_{\exp}(t,x,y)}{\pi^{m}_{\exp}(t,x,y)} \leq 1, \label{portfolio-ratio-exp}
\end{equation}
 for all $(t,x,y) \in [0,T)\times D_{\exp} \times \mathbb R$. This could also be established by observing that Theorem \ref{theorem}, item (ii), and Proposition \ref{proposition-1} imply
\begin{equation}
\frac{\hat{\pi}_{\exp}(t,x,y)}{\pi^{m}_{\exp}(t,x,y)}= \lim_{\gamma \rightarrow -\infty} \frac{\hat{\pi}_{\gamma}(t,x,y)}{\pi^{m}_{\gamma}(t,x,y)} \leq 1,
\end{equation}
 for all $(t,x,y) \in [0,T)\times D_{\exp} \times \mathbb R$.

\subsection{The Gaussian case}\label{subsection-gaussian-prior}

If the prior of $\Theta$ is normally distributed with mean $m$ and variance $v^2$, then the integrals in (\ref{F(t,y)}), (\ref{theta_t-y}), and (\ref{optimal-portfolio-partial}) may be computed explicitly to yield, respectively,
\begin{equation}
F(t,y)=\frac{1}{\sqrt{1+v^2t}} \exp\left(\frac{(m+v^2y)^2}{2v^2(1+v^2t)}-\frac{m^2}{2v^2}\right),
\end{equation}
\begin{equation}
\hat{\Theta}(t,y)=\frac{m+v^{2}y}{1+v^{2}t},
\end{equation}
and
\begin{equation}
\hat{\pi}_{\gamma}(t,x,y)= \left(\frac{x}{\sigma(1-\gamma)}+ \frac{\eta e^{-r(T-t)}}{\sigma \beta}\right) \hat{\Theta}(t,y)\frac{(1-\gamma)(1+v^{2}t)}{(1-\gamma)-\gamma v^2T+v^2t}.   \label{optimal-poltfolio-gaussian}
\end{equation}
This case is considered in \citep{brennan98} and explicitly solved in \citep{rogers-01}. In particular, in \citep{brennan98}, the Author assumes standard power preferences and the portfolio represents the fraction of wealth invested into the risky asset; hence, the optimal portfolio in that setting (equation (9), page 300) is given by (\ref{optimal-poltfolio-gaussian}) with $\eta=0$ and $x=1$.\footnote{In Brennan's notation, the portfolio $\alpha$ numerically approximated in Table I, page 302, has the following analytic form:
$\alpha=\left(  m_0-r\right)  / (\sigma^{2}(  1-\gamma -\gamma\left(  v_0 /\sigma \right)  ^{2}T))  $.}
\begin{remark}\label{remark-existence-gaussian}
Existence and finiteness of portfolio (\ref{optimal-poltfolio-gaussian}) 
are guaranteed under the condition $(1-\gamma)-\gamma v^2T>0$ (cf. \citep{karatzas-zhao-01}, Remark 5.6). Hence, the analysis of $\hat{\pi}_{\gamma}$, $\hat{\pi}_{\gamma}/\pi^{m}_{\gamma}$, as well as $\hat{\pi}_{\gamma}-\pi^{m}_{\gamma}$, w.r.t. $\gamma$, makes sense only for $\gamma < 1/(1+v^2T) <1$.
\end{remark}
Now, for all $\gamma < 1/(1+v^2T) <1$, the ratio
\begin{equation}
 \frac{\hat{\pi}_{\gamma}(t,x,y)}{\pi^{m}_{\gamma}(t,x,y)}=\frac{(1-\gamma)(1+tv^2)}{(1-\gamma)-\gamma v^2T+v^2t},
 \end{equation}
is positive, has limit $1$ as $\gamma \rightarrow 0$, and is increasing with respect to $\gamma$. Therefore, with reference to Theorem \ref{theorem}, the constancy in sign of the market price of risk $\Theta$ represents only a sufficient condition for the claimed properties of the portfolio ratio $\hat{\pi}_{\gamma}/\pi^{m}_{\gamma}$.
Furthermore, similarly to Remark \ref{remark-abs-value-portfolio}, we have
\begin{equation}
\operatorname{sign}\left( \hat{\pi}_{\gamma}(t,x,y) \right) = \operatorname{sign}\left(\pi^{m}_{\gamma}(t,x,y) \right) =\operatorname{sign}\left( \hat{\Theta}(t,y) \right)
\end{equation}
and (\ref{monotonicity-abs-values}) holds. Here, our analysis disagrees with \citep{brennan98} where the Author seems to claim that if $\gamma >0$ then $\hat{\pi}_{\gamma} \geq \pi_{\gamma}^{m}$, and if $\gamma < 0$, then $\hat{\pi}_{\gamma} \leq \pi_{\gamma}^{m}$, at least if we should take literally a proposition on the Abstract at page 295 (see also page 300). But, as our study shows, this is true if and only if $\hat{\Theta}(t,y) \geq 0$ and, in a gaussian setting, we may very well have $\hat{\Theta}(t,y) \leq 0$, in which case the order between the portfolios is reversed. Hence, the precise statement on page 295 should read as follows: \lq\lq... the possibility of learning about the mean return on the risky asset induces the investor to take a larger or smaller position, \textit{in absolute value}, in the risky asset than she would if there were no learning, the direction of the effect depending on whether the investor is more or less risk tolerant than the logarithmic investor.\rq\rq\

The hedging demand in the gaussian setting becomes
\begin{equation}
\hat{\pi}_{\gamma}(t,x,y)-\pi^{m}_{\gamma}(t,x,y)=\left(\frac{x}{\sigma(1-\gamma)}+\frac{\eta e^{-r(T-t)}}{\sigma\beta}\right)\hat{\Theta}(t,y)
\frac{\gamma v^2(T-t)}{(1-\gamma)-\gamma v^2T+v^2t},
\end{equation}
and the analysis as well as the conclusions about monotonicity of $\hat{\pi}_{\gamma}$ and $\hat{\pi}_{\gamma}-\pi_{\gamma}^{m}$ with respect to $\gamma$ are, depending on the sign of $\hat{\Theta}(t,y)$, similar to those of Remark \ref{remark-monotonicity-headging-demand}. Moreover, the relative hedging demand (see Remark \ref{remark-relative-hedging-demand}) takes the following very simple form:
\begin{equation}
\frac{\hat{\pi}_{\gamma}-\pi^{m}_{\gamma}}{\pi^{m}_{\gamma}}= \frac{\gamma v^2(T-t)}{(1-\gamma)-\gamma v^2T+v^2t},
\end{equation}
which dramatically explodes as $\gamma \rightarrow (1+v^2t)/(1+v^2T)<1$, meaning that the effects of uncertainty may be very relevant in the portfolio allocation for certain power investors.

The optimal portfolios for logarithmic and exponential investors in the gaussian setting can be obtained by computing the limits of (\ref{optimal-poltfolio-gaussian}), respectively, as $\gamma \rightarrow 0$ and as $\gamma \rightarrow -\infty$.

\section{Conclusions}\label{conclusions}
In this paper we study the problem of maximizing the expected utility from terminal wealth for HARA investors when the market price of risk is represented by an unobservable random variable with known general prior distribution. We solve the problem and compare the optimal portfolio with the corresponding myopic strategy, that is the portfolio obtained by first solving the investor's maximization problem by considering the market price of risk a given parameter and then replacing the latter with its conditional expected value. We show that, if the prior for the market price of risk is constant in sign, then the ratio between the portfolio under partial observation and the corresponding myopic case is increasing with respect to the degree of risk tolerance (to be precise, the ratio is increasing with respect to the sensitivity of absolute risk tolerance to wealth).
This has a number of consequences such as: 1) The absolute value of portfolio under partial observation is increasing with respect to risk tolerance and it is larger than the myopic counterpart if and only if the investor is more risk tolerant than the logarithmic investor (Remark \ref{remark-abs-value-portfolio}); 2) The absolute hedging demand against uncertainty is increasing at least for degree of risk tolerance sufficiently high (Remark \ref{remark-monotonicity-headging-demand}); 3) The relative hedging demand is increasing for all possible degree of risk tolerance (Remark \ref{remark-relative-hedging-demand}).

The model presented in this paper can be further studied in a number of ways. First, a question of interest is what other types of prior distributions guarantee the properties established in Theorem \ref{theorem} for the portfolios ratio.   
A second issue worth of investigation is the loss of utility for a myopic investor (a sort of cost of myopia). 
Finally, it would be interesting to compare the optimal portfolio under partial observation with the case where the market price of risk is a completely observable random variable (some works have already begun the exploration of this issue, see, for instance, \citep{karatzas-zhao-01} and \citep{brendle-06}. All these are subjects for future research.

\bibliography{biblio}
\bibliographystyle{plain}

\end{document}